\documentclass[letterpaper, 10 pt, conference]{ieeeconf}
\IEEEoverridecommandlockouts
\overrideIEEEmargins

\usepackage{graphicx}
\usepackage[dvipsnames]{xcolor}
\usepackage{amsfonts,amsmath,amssymb}
\usepackage[utf8]{inputenc}
\usepackage{epstopdf}
\usepackage{soul}
\usepackage{url}
\usepackage{todonotes}
\usepackage{float}
\usepackage[font=small]{caption, subcaption}
\usepackage[draft,inline,nomargin,index]{fixme}

\definecolor{forest}{HTML}{009933}
\definecolor{bluesea}{HTML}{6666ff}
\definecolor{rederror}{HTML}{ff3333}
\fxsetup{theme=color,mode=multiuser}
\FXRegisterAuthor{lg}{lgenv}{\color{forest}LG} 
\FXRegisterAuthor{mb}{mbenv}{\color{bluesea}MB} 
\FXRegisterAuthor{lm}{lmenv}{\color{rederror}LM} 

\newcommand{\R}{\mathbb{R}}
\newcommand{\N}{\mathbb{N}}
\newcommand{\cH}{\mathcal{H}}
\newcommand{\cS}{\mathcal{S}}
\newcommand{\cJ}{\mathcal{J}}
\newcommand{\cK}{\mathcal{K}}
\newcommand{\C}{\mathcal{C}}
\newcommand{\D}{\mathcal{D}}
\newcommand{\B}{\mathcal{B}}
\newcommand{\KL}{\mathcal{KL}}
\newcommand{\cL}{\mathcal{L}}
\newcommand{\kF}{\kappa}
\newcommand{\kM}{\boldsymbol{\mathcal{K}}}
\newcommand{\kV}{\boldsymbol{\kappa}}
\newcommand{\bg}{\boldsymbol{\gamma}}
\newcommand{\cvar}{z}
\newcommand{\sty}{\boldsymbol{\dot{\eta}_d}}
\newcommand{\stx}{\boldsymbol{\eta}}
\newcommand{\bcvar}{\boldsymbol{\cvar}}
\newcommand{\T}{^{\top}}

\newcommand\figref{Figure~\ref}
\newcommand\secref{Section~\ref}
\newcommand\tabref{Table~\ref}
\newcommand\lref{Lemma~\ref}
\newcommand\assmref{Assumption~\ref}
\newcommand\col{\text{col}}

\newtheorem{theorem}{Theorem}[section]

\newtheorem{lemma}[theorem]{Lemma}

\newtheorem{assumption}[theorem]{Assumption}

\title{\LARGE Adaptive Nonlinear Regulation via Gaussian Process}%
\author{Lorenzo Gentilini$^1$, Michelangelo Bin$^2$, and Lorenzo Marconi$^1$%
\thanks{$^1$L. Gentilini and L. Marconi are with the Center for Research on Complex Automated Systems (CASY), Department of Electrical, 
Electronic and Information Engineering (DEI), University of Bologna, Bologna, Italy (e-mails:
\texttt{\{lorenzo.gentilini6, lorenzo.marconi\}@unibo.it}). $^2$Michelangelo Bin is with the Department of Electrical and Electronic Engineering, Imperial College London,
London, UK (e-mail: \texttt{m.bin@imperial.ac.uk}).}}%

\begin{document}
\maketitle
\thispagestyle{empty}
\pagestyle{empty}

\begin{abstract}
   The paper deals with the problem of output regulation of nonlinear systems by presenting a learning-based adaptive
   internal model-based design strategy. We borrow from the adaptive internal model design technique recently proposed
   in~\cite{bin2019class} and extend it by means of a Gaussian process regressor. The learning-based adaptation is
   performed by following an ``event-triggered'' logic so that hybrid tools are used to analyse the resulting closed-loop
   system. Unlike the approach proposed in~\cite{bin2019class} where the friend is supposed to belong to a specific
   finite-dimensional model set, here we only require smoothness of the ideal steady-state control action.
   The paper also presents numerical simulations showing how the proposed method outperforms previous approaches. 
\end{abstract}

\section{Introduction}%
\label{sec:INTRODUCTION}
We consider a class of nonlinear systems of the form
\begin{equation}%
   \label{eq:CLASS-SYSTEM}
   \begin{matrix}
      \dot{x} = f(x,w,u), & y = h(x,w), & e = h_e(x,w),
   \end{matrix}
\end{equation}
with state $x \in \R^{n_x}$, control input $u \in \R^{n_u}$, measured outputs $y \in \R^{n_y}$, regulation error $e \in \R^{n_e}$,
and with $w \in \R^{n_w}$ an exogenous signal.
As customary in the literature of output regulation we refer to the subsystem $w$ as the \textit{exosystem}
\begin{equation}%
   \label{eq:GENERAL-EXOSYSTEM}
   \dot{w} = s(w).
\end{equation}
For the class of systems~\eqref{eq:CLASS-SYSTEM}, in this paper we consider the problem of design an
output-feedback regulator of the form
\begin{equation*}
   \begin{matrix}
      \dot{x}_c = f_c(x_c, y, e), & u = k_c(x_c, y, e)
   \end{matrix}
\end{equation*}
that ensures boundedness of the closed-loop trajectories and asymptotically removes the effect of $w$
on the regulated output $e$, thus \textit{ideally} obtaining $e(t) \to 0$ as $t \to \infty$.
More precisely, the seeked regulator ensures
\begin{equation*}
   \lim \sup_{t \rightarrow \infty} \left\|e(t)\right\| \le \varepsilon
\end{equation*}
with $\varepsilon \ge 0$ possibly a small number measuring the regulator's asymptotic performance.
Depending on the control objective the regulation problem undergoes the following taxonomy.
\textit{Asymptotic regulation} denotes the case in which $\varepsilon = 0$.
\textit{Approximate regulation} denotes the case in which the control objective is relaxed with a fixed $\varepsilon > 0$.
Finally, \textit{practical regulation} refers to the case in which $\varepsilon$ can be reduced arbitrary by tuning the regulator parameters.
When one of the above control objectives is achieved in spite of uncertainties in the plant's model,
we call it \textit{robust regulation}, moreover if some learning mechanism is introduced to compensate for
uncertainties in the exosystem, the problem is typically referred to as \textit{adaptive regulation}.
This work frames specifically the problem of ``adaptive approximate'' regulation, where the adaptation side
is approached in a system identification fashion with the Gaussian process regression used to infer the internal model dynamics
directly out of the collected data.

\subsubsection*{Related Works}
Asymptotic output regulation is a rich research area with a well-established theoretical foundation.
Although the number of solutions presented in the literature is quite variegated, both for linear and nonlinear systems,
most of the work can be traced-back to~\cite{francis1976internal} and~\cite{davison1976robust} where
Francis, Wonham, and Davison firstly formalize and solve the asymptotic regulation problem in the context of linear systems.
Asymptotic output regulation for Single-Input-Single-Output (SISO) nonlinear systems has been
under investigation since the early 90s, first in a local context~\cite{byrnes1997structurally},~\cite{isidori1990output},
and later in a purely nonlinear framework~\cite{byrnes2003limit},~\cite{marconi2007output},
based on the ``non-equilibrium'' theory~\cite{byrnes2003asymptotic}.
Recently, asymptotic regulators have been also extended to some classes of multivariable nonlinear
systems~\cite{wang2016nonlinear},~\cite{wang2017robust}.
The major limitation of such regulators dwells in the complexity. Indeed, the sufficient conditions under
which asymptotic regulation is ensured are typically expressed by equations whose analytic solution becomes a hard task even for ``simple'' problems.
Moreover, even if a regulator can be constructed, asymptotic regulation remains a fragile property that is lost at front of
the slightest plant's or exosystem's perturbation~\cite{bin2018robustness}.
The aforementioned problems motivates the researcher to move toward more robust solutions, introducing the concept of adaptive and approximate regulation.
Among the approaches to approximate regulation it is worth mentioning~\cite{marconi2008uniform} and~\cite{astolfi2015approximate}, whereas practical regulators
can be found in~\cite{isidori2012robust} and~\cite{freidovich2008performance}. Adaptive designs of regulators can be found
in~\cite{priscoli2006new} and~\cite{pyrkin2017output}, where linearly parametrized
internal models are constructed in the context of adaptive control, in~\cite{bin2019adaptive} where discrete-time adaptation algorithms
are used in the context of multivariable linear systems, and in~\cite{forte2016robust},~\cite{bin2019class},~\cite{bin2020approximate}, where adaptation of a nonlinear
internal model is approached as a system identification problem.

Learning dynamics models is also an active research topic.
In particular, Gaussian Processes (GPs) are increasingly used to estimate unknown dynamics~\cite{kocijan2016modelling},~\cite{buisson2020actively}.
Unlike other nonparametric models, GPs represent an attractive tool in learning dynamics due to their flexibility in modeling nonlinearities and
the possibility to incorporate prior knowledge~\cite{rasmussen2003gaussian}.
Moreover, since GPs allow for analytical formulations, theoretical guarantees on the a posteriori can be drawn directly from the collected
data~\cite{umlauft2020learning},~\cite{lederer2019uniform}.
Recently, GP models spread inside the field of nonlinear optimal control~\cite{sforni2021learning}, with several applications to the particular case of
Model Predictive Control (MPC)~\cite{torrente2021data},~\cite{kabzan2019learning}, and inside the field of nonlinear observers~\cite{buisson2021joint}.

To the best of the authors knowledge, this is the first attempt to couple an internal model regulator with a GP regressor,
bridging the gap between adaptive regulation and learning dynamics.

\subsubsection*{Contributions}
In this paper we propose a novel learning-based adaptive regulation technique built on top of the recently published work~\cite{bin2019class},
where a ``class-type'' identification-based internal model is used to solve the problem of \textit{approximate regulation}.
In particular, we borrow the proposed post-processing internal model and extend it by means of a Gaussian process regressor.
While the plan and the controller evolve in continuous time, the proposed identifier is updated at specific discrete time instants
in an event-triggered fashion. Hence the closed-loop system is a \textit{hybrid system}.
The additional complexity introduced in the analysis is motivated by the ability of the regulator to generate a larger
``class of input signals'' needed to ensure zero regulation error (the so-called \textit{friend}~\cite{bin2018chicken}).
Unlike previous approaches where the friend is supposed to belong to a specific finite-dimensional model set~\cite{bin2019class}, here we only
require smoothness, namely we assume that the steady-state control action belongs to a predefined Reproducing Kernel Hilbert Space (RKHS).
As the reported numerical simulations show, the proposed approach presents comparable performance to~\cite{bin2019class} when the friend can be well characterized by
the chosen model set, while outperforms previous approaches otherwise.
The proposed setting falls outside the framework of~\cite{bin2019class}, which only deals with continous-time identifiers, and
results to be closer to the approach of~\cite{bin2020approximate} where, however, a different regulator structure is employed.
Nevertheless, by means of the same arguments used in~\cite{bin2020approximate}, it is possible to prove that hybrid identifiers can be used in the
regulator of~\cite{bin2019class} as well, provided that the identifier satisfies some strong stability properties formalized as the 
\textit{identifier requirement}~\cite[Definition 1]{bin2020approximate}. 
In this work, we construct a Gaussian process-based identifier satisfying such an identifier requirement, therefore fitting the
framework of~\cite{bin2019class} and~\cite{bin2020approximate}.
This allows us to use the designed GP-based identifier to solve adaptive regulation problems for a class of nonlinear systems.

The paper unfolds as follows.
In~\secref{sec:PROBLEM-AND-PRELIMINARIES} we describe the problem set-up, along with the standing assumptions, and some preliminaries.
In~\secref{sec:LEARNING-DRIVEN-REGULATION} we present the proposed regulator and state the main result of the paper.
Finally, in~\secref{sec:NUMERICAL-SIMULATIONS} a numerical example is presented.

\section{Problem set-up and Preliminaries}%
\label{sec:PROBLEM-AND-PRELIMINARIES}
In this section we first present in detail the approximate regulation problem that this work focuses on, along with
the basic standing assumptions. Then, a post-processing internal model design technique and the basic concepts
behind the notion of Gaussian process regression are reviewed.
\subsection{Nonlinear Approximate Output Regulation}
We focus on a subclass of the general regulation problem
presented in~\secref{sec:INTRODUCTION}, by considering systems of the form
\begin{equation}%
   \label{eq:PARTICULAR-SYSTEM}
   \begin{split}
      \dot{x}_0 &= f_0(x,w) + b(x,w)u\\
      \dot{\chi} &= F \chi + H \zeta\\
      \dot{\zeta} &= q(x,w) + \Omega(x,w)u\\
      e &= C \chi, \hspace{0.5cm} y = \text{col}(\chi, \zeta),
   \end{split}
\end{equation}
in which $x_0 \in \R^{n_0}$, $y \in \R^{n_y}$, $e \in \R^{n_e}$, $\chi \in \R^{n_e}$, and, $u \in \R^{n_u}$ with $n_u \ge n_e$.
Moreover, $\chi = \col(\chi^1, \dots, \chi^{n_e})$, with $\chi^i \in \R^{n_{\chi}^i}$, $i = 1, \dots, n_e$, and
$\sum_{k=1}^{n_e} n_{\chi}^k = n_{\chi}$. The matrices $F \in \R^{n_{\chi} \times n_{\chi}}$, $H \in \R^{n_{\chi} \times n_{e}}$,
and $C \in \R^{n_{e} \times n_{\chi}}$ are defined as a block-diagonal matrices with entries
\begin{equation*}
   \begin{matrix}
      F_i =
      \begin{pmatrix}
         0_{(n_{\chi}^i-1) \times 1} & I_{n_{\chi}^i - 1} \\
         0 & 0_{1 \times (n_{\chi}^i-1)}
      \end{pmatrix}, &
      H_i =
      \begin{pmatrix}
         0_{(n_{\chi}^i-1) \times 1} \\ 1
      \end{pmatrix},
   \end{matrix}
\end{equation*}
\begin{equation*}
   C_i =
      \begin{pmatrix}
         1 & 0_{1 \times (n_{\chi}^i-1)}
      \end{pmatrix}.
\end{equation*}
Equation~\eqref{eq:PARTICULAR-SYSTEM} frames the problem of output regulation on a particular class of systems that embraces
a large number of use-cases addressed in literature. In particular, note that all systems presenting \textit{(a)} a well-defined vector
relative degree and admitting a canonical normal form, or that are \textit{(b)} strongly invertible and feedback linearisable~\cite{wang2014global},
with respect to the pair $\left(u,e\right)$, fit inside the proposed framework.
The results presented in this work are based on the following standing assumptions (see~\cite[Assumption~A1,~A2]{bin2019class}),
where we denote the state of~\eqref{eq:PARTICULAR-SYSTEM} by $x = \col(x_0, \chi, \zeta)$.
\begin{assumption}%
   \label{assm:REGULATOR-EQUATIONS}
   There exist $\beta_0 \in \KL$, $\alpha_0 > 0$ and, for each solution $w$ of~\eqref{eq:GENERAL-EXOSYSTEM}, there exist $x_0^* : \R_{\ge 0} \mapsto \R^{n_0}$
   and $u^* : \R_{\ge 0} \mapsto \R^{n_u}$ fulfilling
   \begin{equation}%
      \label{eq:REGULATOR-EQUATIONS}
      \begin{split}
         \dot{x}_0^* & = f_0(w, x^*) + b(w, x^*)u^*, \\
         0 & = q(w, x^*) + \Omega(w, x^*)u^*,
      \end{split}
   \end{equation}
   where $x^* = \left(x_0^*,0,0\right)$, and for all $t > 0$ the following holds
   \begin{equation*}
      \left| x_0(t) - x_0^*(t) \right| \le \beta_0\left(\left| x_0(0) - x_0^*(0) \right|, t\right) + \alpha_0 \left| \left(\chi, \zeta\right) \right|_{\left[0,t\right)}.
   \end{equation*}
\end{assumption}
\begin{assumption}%
   \label{assm:UNIFORM-DETECTABILITY}
   There exists a full-rank matrix $\cL \in \R^{n_u \times n_e}$ such that the matrix $\Omega(w, x)\cL$ is bounded, and
   \begin{equation*}
      \cL\T \Omega(w,x)\T + \Omega(w,x) \cL \ge I_{n_e}
   \end{equation*}
   holds for all $(w,x) \in \R^{n_w} \times \R^{n_x}$, and the map $(w,x) \mapsto (\Omega(w,x)\cL)^{-1} q(w,x)$ is Lipschitz.
\end{assumption}
In this framework,~\cite{bin2019class} proposes a post-processing internal model of the form
\begin{equation}%
   \label{eq:INTERNAL-MODEL}
   \begin{matrix}
      \dot{\eta} = \Phi(\eta) + Ge, & \eta \in \R^{dn_e},
   \end{matrix}
\end{equation}
with $d \in \N$, $\eta = \left(\eta_1, \dots, \eta_d\right)\T$, $\eta_i \in \R^{n_e}$, and
\begin{equation*}
   \begin{matrix}
      \Phi(\eta) = 
      \begin{pmatrix}
         \eta_2 \\ \vdots \\ \eta_d \\ \psi(\eta)
      \end{pmatrix}, &
      G =
      \begin{pmatrix}
         g h_1 I_{n_e} \\ g^2 h_2 I_{n_e} \\ \vdots \\ g^d h_d I_{n_e}
      \end{pmatrix}.
   \end{matrix}
\end{equation*}
In the aforementioned definition, the coefficients $h_i$, with $i = 1, \dots, d$, are fixed so that the polynomial
$s^d + h_1s^{d-1}+\cdots+h_{d-1}s+h_d$ is Hurwitz, $g > 0$ is a parameter to be designed, and
$\psi: \R^{dn_e} \mapsto \R^{n_e}$ is a function to be fixed.
Moreover, the static stabiliser control action is chosen as
\begin{equation}%
   \label{eq:CONTROL-ACTION}
   u = \cL(K_{\chi}\chi + K_{\zeta}\zeta + K_{\eta}\eta_1 + K_w \nu(x^*, w)),
\end{equation}
where the matrices $K_{\chi}$, $K_{\zeta}$, and $K_{\eta}$ take the form
\begin{equation*}
   \begin{matrix}
      K_{\chi}(l,\delta) = lK(\delta), & K_{\zeta}(l) = -lI_{n_e}, & K_{\eta}(l,\delta) = lK(\delta)C\T,
   \end{matrix}
\end{equation*}
with $K(\delta) = \text{blkdiag}\left(K^1(\delta), \dots, K^{n_e}(\delta)\right)$, where
\begin{equation*}
   K^i(\delta) = - \left( \begin{matrix} c_1^i \delta^{n_{\chi}^i} & c_2^i \delta^{n_{\chi}^i - 1} & c_{n_{\chi}^i}^i \delta \end{matrix} \right),
\end{equation*}
for $i = 1, \dots, n_e$, in which the coefficients $c_j^i$ are chosen so that the polynomials
$s^{n_{\chi}^i} + c^i_{n_{\chi}^i}s^{n_{\chi}^i - 1} + \cdots + c_2^i s + c_1^i$, $i = 1, \dots, n_e$, are Hurwitz, and $l, \delta > 0$ are design parameters to be fixed.
Note that the matrix $K_{w}$ and the function $\nu(\cdot, \cdot)$ are left as a degree of freedom. Indeed these quantities can be used to represent possible feedforward
contributions added by the designer employing knowledge about $w$ and $x^*$ (Equation~\eqref{eq:REGULATOR-EQUATIONS}).
For further details, the reader is referred to~\cite{bin2019class}.

A key step during the regulator design is represented by the selection of the pair $\left( d, \psi \right)$, which
should be chosen in order to achieve small, possibly zero, asymptotic regulation error, in spite of uncertainties involving the
regulation equations~\eqref{eq:REGULATOR-EQUATIONS} and the system dynamics~(\ref{eq:GENERAL-EXOSYSTEM},~\ref{eq:PARTICULAR-SYSTEM}).
Such an ambitious goal can be reached only by relying on some preliminary knowledge about the \textit{class of signals} to which
$\dot{\eta}_d$ and $\eta$ are expected to belong. In this context the steady-state signals $(x^*, u^*)$ are the anchor point from which that knowledge can be drawn.
In particular, letting $\eta_1^* = \bg_{(l,\delta)}(w,x^*)$ from Equation~\eqref{eq:CONTROL-ACTION} with $u^*$ defined as Equation~\eqref{eq:REGULATOR-EQUATIONS},
the ideal internal model trajectory is
\begin{equation*}
   \begin{split}
      \eta_i^* & = L_{s(w)}^{i-1} \bg_{(l,\delta)}(w,x^*) + L_{f_0(w,x^*)+b(w,x^*)u^*}^{i-1}\bg_{(l,\delta)}(w,x^*), \\
      \dot{\eta}_d^* & = \eta_{d+1}^*,
   \end{split}
\end{equation*}
with $i = 2, \dots, d+1$ and $L^j_{h(\cdot)}f(\cdot)$ denoting the $j$th Lie derivative of $f$ along the trajectories of $h$.
The pair $\left( d^*, \psi^* \right)$ should be ideally chosen to guarantee $\dot{\eta}_{d^*}^* = \psi^*(\eta^*)$ which makes
$(x^*, \eta^*)$ a trajectory of the closed-loop system with associated zero regulation error.
Nevertheless, the a priori design of such couple is not realistic since the solution of Equation~\eqref{eq:REGULATOR-EQUATIONS} is usually
highly uncertain and the quantities $\eta^*$ depend on the stabiliser structure $(l, \delta)$ that cannot be apriori fixed,
raising the so-called \textit{chicken-egg dilemma}~\cite{bin2018chicken}.
Current state-of-the-art strategies try to handle the problem of designing a suitable pair $\left( d, \psi \right)$
by casting it as an \textit{identification problem}, where the map $\psi$ is interpreted as a \textit{prediction model}
parameterized via a set of auxiliary parameters $\theta \in \R^{n_{\theta}}$.
In doing that, one implicitly assumes that the \textit{friend} belongs to a known fixed \textit{model set} of the form
\begin{equation*}
   \mathcal{M} = \{ \psi_{\theta} : \theta \in \R^{n_{\theta}} \}.
\end{equation*}
Therefore, these approaches present the disadvantage of limiting the \textit{class of friends} which we can deal with, leading to a
degraded performance in all those cases in which the steady-state signals $(x^*, u^*)$ are highly uncertain
and the chosen class $\mathcal{M}$ is inadequate to represent the ideal function $\psi^*$.
Unlike previous works in this field, we drop the assumption of $\psi$ belonging to a given \textit{model set} on
behalf of a more general and less conservative hypothesis. In particular, we let such a function be of whatever shape, with
the only constraint to be \textit{sufficiently smooth}.
In view of the latter, we recall~\cite[Assumption~A3]{bin2019class} under which the asymptotic stability results can be drawn.
\begin{assumption}%
   \label{assm:INTERNAL-MODEL}
   The map $\psi(\eta)$ is Lipschitz and differentiable with a locally Lipschitz derivative, and the Lipschitz constants do not depend on $\delta$ and $l$.
   Moreover, there exists a compact set $H^* \subset \R^{n_e} \times \R^{dn_e}$, independent on $\delta$ and $l$, such that every solution of~\eqref{eq:INTERNAL-MODEL}
   satisfies $\left(\eta_d^*(t), \eta^*(t) \right) \in H^*$ for all $t \in \R_{\ge 0}$.
\end{assumption}

\subsection{Gaussian Process Regression}%
\label{sec:GAUSSIAN-INFERENCE}
The key idea behind the proposed approach dwells in modeling the unknown function $\psi$ as the realization of a Gaussian process.
GPs are function estimators widely used because of the flexibility they offer in modeling nonlinear maps directly out from the collected data~\cite{rasmussen2003gaussian}.
A GP model is fully described by a mean function $m: \R^{dn_e} \mapsto \R^{n_e}$ and a covariance function (\textit{kernel})
$\kF: \R^{dn_e} \times \R^{dn_e} \mapsto \R$. Whereas there are many possible choices of mean and covariance functions, in this work we keep 
the formulation of $\kF$ general, with the only constraint expressed by~\assmref{assm:K-CONTINUOUS-BOUNDED} below. Yet we force, without loss of generality, $m(\eta) = 0_{n_e}$
for any $\eta$. Thus we assume that
\begin{equation*}
   \psi(\eta) \sim \mathcal{GP}(0, \kF(\cdot, \cdot)).
\end{equation*}
Supposing to have access to a data-set $(\stx, \sty) = \{ (\eta(t_1), \dot{\eta}_d(t_2)), \dots, (\eta(t_{n_{\text{ds}}}), \dot{\eta}_d(t_{n_{\text{ds}}}))\}$
with each pair $(\eta(t_h), \dot{\eta}_d(t_h)) \in \R^{dn_e} \times \R^{n_e}$ obtained as $\dot{\eta_d}(t_h) = \psi(\eta(t_h)) + \varepsilon(t_h)$ with
$\varepsilon(t_h) \sim \mathcal{N}(0, \sigma_n^2 I_{n_e})$ be a white Gaussian noise, then the posterior distribution of $\psi$ given the data-set is still
a Gaussian process with mean $\mu$ and variance $\sigma^2$ given by (\cite{rasmussen2003gaussian})
\begin{equation}%
   \label{eq:GP-POSTERIOR}
   \begin{split}
      \mu(\eta) & = \kV(\eta)\T\left(\kM + \sigma_n^2 I_{n_{\text{ds}}}\right)^{-1}\sty, \\
      \sigma^2(\eta) & = \kF(\eta, \eta) - \kV(\eta)\T\left(\kM + \sigma_n^2 I_{n_{\text{ds}}}\right)^{-1}\kV(\eta),
   \end{split}
\end{equation}
where $\kM \in \R^{n_{\text{ds}} \times n_{\text{ds}}}$ is the \textit{Gram matrix} whose $(k,h)$th entry is $\kM_{k,h} = \kF(\stx_k, \stx_h)$,
with $\stx_k$ the $k$th entry of $\stx$, and $\kV(\eta) \in \R^{n_{\text{ds}}}$ is the kernel vector whose $k$th component is $\kV_k(\eta) = \kF(\eta, \stx_k)$.
The problem of inferring an unknown function $\psi$ from a finite set of data can be seen as a special case of \textit{ridge regression} where the
prior assumptions (mean and covariance) are encoded in terms of $smoothness$ of $\mu$.
In particular, let $\cH$ be a RKHS associated with the kernel function $\kF$, then the function $\psi$ can be infered by minimizing the functional
\begin{equation*}
   \cJ = \frac{\lambda_s}{2} \left\| \mu \right\|^2_{\cH} + Q(\sty, \mu(\stx)),
\end{equation*}
where the first term plays the role of \textit{regularizer} and represents the smoothness assumptions on $\mu$
as encoded by a suitable RKHS, while the second one represents the data-fit term assessing the quality of the prediction
$\mu(\stx)$ with respect to the observed data $\sty$~\cite{rasmussen2003gaussian}.
According to the \textit{representer theorem}~\cite{o1986automatic}, each minimizer $\mu \in \cH$ of $\cJ$ takes the form  
$\mu(\eta) = \kV(\eta)\alpha$. In the particular case in which $Q(\sty, \mu(\stx))$ corresponds to a negative log-likelihood
of a Gaussian model with variance $\sigma_n^2$, namely
\begin{equation*}
   Q(\sty, \mu(\stx)) = \frac{1}{2 \sigma_n^2} \left\| \sty - \mu(\stx) \right\|^2_2,
\end{equation*}
the value of $\alpha$ recovers the expression in Equation~\eqref{eq:GP-POSTERIOR} as
\begin{equation*}
   \alpha = (\kM + \sigma_n^2I_{n_{\cvar}})^{-1}\sty.
\end{equation*}
From now on we suppose that the following standing assumptions hold (see~\cite[Assumption~2,~Assumption~3]{buisson2021joint})
\begin{assumption}%
   \label{assm:MU-CONTINUOUS-BOUNDED}
   $\mu$ is Lipschitz continuous with Lipschitz constant $L_{\eta}$, and its norm is bounded by $\mu_{\text{max}}$.
\end{assumption}
\begin{assumption}%
   \label{assm:K-CONTINUOUS-BOUNDED}
   The kernel function $\kF(\cdot, \cdot)$ is Lipschitz continuous with constant $L_{\kF}$, with a locally Lipschitz derivative of constant $L_{d\kF}$,
   and its norm is bounded by $\kF_{\text{max}}$.
\end{assumption}

Although any kernel fulfilling~\assmref{assm:K-CONTINUOUS-BOUNDED} can be a valid candidate, in the following,
we exploit the commonly adopted \textit{squared exponential kernel} as prior covariance function, which can be expressed as
\begin{equation}%
   \label{eq:EXPONENTIAL-KERNEL}
   \kF(\eta, \eta') = \sigma^2_p \exp\left( -\left(\eta - \eta'\right)\T \Lambda^{-1}  \left(\eta - \eta'\right) \right)
\end{equation}
for all $\eta, \eta' \in \R^{dn_e}$, where $\Lambda = \text{diag}(2\lambda_{\eta_1}^2, \dots, 2\lambda_{\eta_{dn_e}}^2)$, $\lambda_{\eta_{i}} \in \R_{>0}$ is
known as \textit{characteristic length scale} relative to the $i$th signal, and $\sigma^2_p$ is usually called
\textit{amplitude}~\cite{rasmussen2003gaussian}.
We conclude this section by stating a constructive assumption, on which the main contribution of this work is built.
\begin{lemma}%
   \label{lem:RKHS-LIPSCHITZ}
   Let $\cH$ be a RKHS induced by a positive-definite kernel $\kF$ and let $f \in \cH$,
   then $f$ is Lipschitz continuous with constant $\left\| f \right\|_{\cH}$.
\end{lemma}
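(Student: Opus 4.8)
The plan is to derive the bound directly from the single structural feature of a RKHS, namely the reproducing property: for the positive-definite kernel $\kF$ and every point $\eta$, the section $\kF(\cdot,\eta)$ belongs to $\cH$ and represents the evaluation functional, so that $f(\eta) = \langle f,\; \kF(\cdot,\eta)\rangle_{\cH}$ for every $f \in \cH$. No other property of $\cH$ is needed; the remainder is a Cauchy--Schwarz estimate together with an elementary bound on the kernel sections.

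First I would fix two arguments $\eta, \eta'$ and rewrite the increment of $f$ as a single inner product,
\[
   f(\eta) - f(\eta') = \langle f,\; \kF(\cdot,\eta) - \kF(\cdot,\eta')\rangle_{\cH},
\]
which follows at once from linearity of the inner product and the reproducing property applied at $\eta$ and at $\eta'$. Cauchy--Schwarz in $\cH$ then gives
\[
   |f(\eta) - f(\eta')| \le \|f\|_{\cH}\,\bigl\|\kF(\cdot,\eta) - \kF(\cdot,\eta')\bigr\|_{\cH},
\]
so that $\|f\|_{\cH}$ is already isolated as the multiplicative gain and only the norm of the kernel-section increment remains to be controlled.

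To handle that last factor I would again invoke the reproducing property, now in the form $\langle \kF(\cdot,\eta),\; \kF(\cdot,\eta')\rangle_{\cH} = \kF(\eta,\eta')$, and expand the squared norm as
\[
   \bigl\|\kF(\cdot,\eta) - \kF(\cdot,\eta')\bigr\|_{\cH}^2 = \kF(\eta,\eta) - 2\kF(\eta,\eta') + \kF(\eta',\eta').
\]
The crux is then to show that this quantity is bounded by $\|\eta-\eta'\|^2$, which is exactly what converts the kernel-induced distance into a Euclidean Lipschitz estimate of unit gain and hence delivers the stated constant $\|f\|_{\cH}$. For a smooth stationary kernel this is short: a first-order Taylor expansion of $\kF$ controlled by the local Lipschitz bound on its derivative from~\assmref{assm:K-CONTINUOUS-BOUNDED} produces a quadratic bound in $\|\eta-\eta'\|$, and for the squared-exponential kernel~\eqref{eq:EXPONENTIAL-KERNEL} the elementary inequality $1-e^{-z}\le z$ makes the estimate fully explicit. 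I expect this final step to be the delicate point: obtaining precisely $\|\eta-\eta'\|$, rather than a constant multiple of it, forces the kernel normalization carried by $\sigma_p^2$ and $\Lambda$ to be absorbed into the bound, and it is here that any hidden hypothesis needed for the Lipschitz gain to equal exactly $\|f\|_{\cH}$ must be made explicit.
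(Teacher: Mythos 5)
Your proposal follows exactly the route the paper indicates (the paper omits the proof, saying only that it ``follows from the reproducing property and the Cauchy--Schwartz inequality''), and your first two steps --- writing $f(\eta)-f(\eta')=\langle f,\kF(\cdot,\eta)-\kF(\cdot,\eta')\rangle_{\cH}$ and applying Cauchy--Schwarz --- are precisely that argument. The ``delicate point'' you flag at the end is real, and it is a defect of the lemma statement rather than of your proof: expanding the kernel-section increment for the squared-exponential kernel~\eqref{eq:EXPONENTIAL-KERNEL} gives
$\left\|\kF(\cdot,\eta)-\kF(\cdot,\eta')\right\|_{\cH}^2 = 2\sigma_p^2\bigl(1-e^{-(\eta-\eta')\T\Lambda^{-1}(\eta-\eta')}\bigr) \le \sigma_p^2\,\|\eta-\eta'\|^2/\min_i\lambda_{\eta_i}^2$,
so the Lipschitz constant one actually obtains is $\left\|f\right\|_{\cH}\,\sigma_p/\min_i\lambda_{\eta_i}$, which collapses to $\left\|f\right\|_{\cH}$ only under a normalization of the kernel (e.g.\ $\sigma_p=\min_i\lambda_{\eta_i}$) that the paper never imposes --- indeed the simulation parameters in \tabref{tab:GP-SIMULATION-PARAMETERS} give a factor of $10$. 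For the way the lemma is used (to deduce the first part of \assmref{assm:INTERNAL-MODEL} from \assmref{assm:RKHS-BELONGING}), only Lipschitz continuity with \emph{some} constant proportional to $\left\|f\right\|_{\cH}$ and independent of $\delta$ and $l$ is needed, so your version of the statement --- Lipschitz with constant $c_{\kF}\left\|f\right\|_{\cH}$ for an explicit kernel-dependent $c_{\kF}$ --- is the one that is both provable and sufficient.
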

The proof of the above lemma follows directly from the application of the reproducing property
and the Cauchy-Schwartz inequality, and thus omitted.
\begin{assumption}%
   \label{assm:RKHS-BELONGING}
   The map $\psi$ belongs to the RKHS associated to the kernel function $\kF(\cdot, \cdot)$ in Equation~\eqref{eq:EXPONENTIAL-KERNEL}.
\end{assumption}
It is worth noting that the first statement of~\assmref{assm:INTERNAL-MODEL} is implied by~\assmref{assm:RKHS-BELONGING} by means of
\lref{lem:RKHS-LIPSCHITZ}.

\section{Gaussian Process-based Adaptive Regulation}%
\label{sec:LEARNING-DRIVEN-REGULATION}
The proposed regulator reads as follows
\begin{equation}%
   \label{eq:OVERALL-REGULATOR}
   \begin{split}
      &
      \begin{cases}
         \dot{\tau} & = 1 \\
         \dot{\eta} & = \Phi(\eta, \mu(\eta, \varsigma, \alpha, \tau)) + Ge \\
         \dot{\xi}_1 & = \xi_2 - m_1 \rho(\xi_1 - \eta_d) \\
         \dot{\xi}_2 & = \dot{\mu}(\eta, \varsigma, \alpha, \tau) - m_2 \rho^2 (\xi_1 - \eta_d) \\
         \dot{\varsigma} & = 0 \\
      \end{cases} \\
      & (\eta, \xi_1, \xi_2, \varsigma, \tau) \in \C, \\
      &
      \begin{cases}
         \tau^{+} & = 0 \\
         \eta^{+} & = \eta \\
         \xi_1^{+} & = \xi_1 \\
         \xi_2^{+} & = \xi_2 \\
         \varsigma^{+} & = \left(S \otimes I_{p}\right) \varsigma + \left(B \otimes I_{p}\right) \begin{bmatrix} \eta & \xi_2 & \tau \end{bmatrix}\T \\
      \end{cases} \\
      & (\eta, \xi_1, \xi_2, \varsigma, \tau) \in \D, \\
   \end{split}
\end{equation}
in which $\alpha = \gamma(\varsigma) \in \R^{n_{\text{ds}} \times n_{\text{ds}}}$, $p = dn_e + n_e + 1$,
$(\xi_1, \xi_2) \in \R^{n_e} \times \R^{n_e}$, and $\varsigma \in \R^{n_{\varsigma}}$ with $n_{\varsigma} = pn_{\text{ds}}$.
The matrices $S \in \R^{n_{\text{ds}} \times n_{\text{ds}}}$ and $B \in \R^{n_{\text{ds}}}$ have the ``shift'' form
\begin{equation*}
   \begin{matrix}
      S = 
      \begin{pmatrix}
         0_{(n_{\text{ds}}-1) \times 1} & I_{n_{\text{ds}}-1} \\
         0 & 0_{1 \times (n_{\text{ds}}-1)}
      \end{pmatrix}, &
      B = 
      \begin{pmatrix}
         0_{(n_{\text{ds}}-1) \times 1} \\ 1
      \end{pmatrix},
   \end{matrix}
\end{equation*}
while $\Phi$ and $G$ have the same structure described by Equation~\eqref{eq:INTERNAL-MODEL},
$\C = \{ (\eta, \xi_1, \xi_2, \varsigma, \tau) \in \R^{dn_e} \times \R^{n_e} \times \R^{n_e} \times \cS \times \R_{\ge 0} : \sigma^2(\eta, \varsigma, \alpha, \tau) \le \sigma^2_{\text{thr}} \}$
represents the flow set, and
$\D = \{ (\eta, \xi_1, \xi_2, \varsigma, \tau) \in \R^{dn_e} \times \R^{n_e} \times \R^{n_e} \times \cS \times \R_{\ge 0} : \sigma^2(\eta, \varsigma, \alpha, \tau) \ge \sigma^2_{\text{thr}} \}$
is the jump set, with $\sigma^2_{\text{thr}}$ arbitrary. The functions $\mu(\eta, \varsigma, \alpha, \tau)$ and $\sigma^2(\eta, \varsigma, \alpha, \tau)$
represent the a posteriori GP estimated mean and variance after the collection of $n_{\text{ds}}$ samples, and
$\cS \subseteq \R^{n_{\text{ds}} n_{\varsigma}}$.
The proposed regulator is composed of \textit{(a)} a purely continuous-time subsystem $\left( \eta, \xi_1, \xi_2 \right)$ whose dynamics depends
on $\varsigma$ and $\alpha$ that are constant during flow, \textit{(b)} a purely discrete-time subsystem $\varsigma$ updated ad each jump time,
and \textit{(c)} a hybrid clock $\tau$. Note that, due to the definition of the sets $\C$ and $\D$ the jumping law is not directly
related to the clock variable $\tau$, thus at the moment it is not clear if~\eqref{eq:OVERALL-REGULATOR} suffers from
\textit{zeno} or \textit{chattering} issues.
The subsystem $\eta$ plays the role of internal model unit, and it is taken of the same form as~\eqref{eq:INTERNAL-MODEL}.
The subsystem $(\xi_1, \xi_2)$ plays the role of \textit{observer} of the quantity $\dot{\eta}_d$ required to build the 
data-set and not directly available. The dynamic equation of $(\dot{\xi}_1, \dot{\xi}_2)$ follows the canonical high-gain construction
with the coefficients $m_1, m_2 > 0$ arbitrary and $\rho > 0$ left as a design parameter.
The quantities $(\xi_2, \eta)$ act as proxy for the ideal $(\dot{\eta}_d^*, \eta^*)$ required to build an approximation of $\psi^*$;
$(\xi_2, \eta)$ are thus feeded to the discrete-time GP regressor represented by the subsystem $\varsigma$ and by the properly defined
functions $(\mu, \sigma^2, \gamma)$. 
In particular, denoting $\cvar = \col{(\eta, \tau)}$, the latter functions read as
\begin{equation*}
   \begin{split}
      \mu(\cvar) & = \kV(\cvar)\T \gamma(\varsigma) \boldsymbol{\xi_2}, \\
      \sigma^2(\cvar) & = \kF(\cvar, \cvar) - \kV(\cvar)\T \gamma(\varsigma) \kV(\cvar), \\
      \gamma(\varsigma) & = \left( \kM + \sigma_n^2 I_{n_{\text{ds}}} \right)^{-1},
   \end{split}
\end{equation*}
where $\kV$ and $\kM$ are defined as in~\secref{sec:GAUSSIAN-INFERENCE}, with the only differency that the kernel function $\kF$
has been enhanced by adding a dependency from $\tau$. In this context, Equation~\eqref{eq:EXPONENTIAL-KERNEL} takes the form
\begin{equation*}
   \begin{split}
      \kF(\cvar_i, \cvar_j) & = \sigma^2_p \exp\left( -\left( \eta_i - \eta_j \right)\T \Lambda^{-1} \left( \eta_i - \eta_j \right) \right) \\
         & \exp \left( -\sum_{k = \min(i, j)}^{\max(i, j)} \left\| \frac{\tau_k}{2 \lambda_{\tau}} \right\| \right) \text{ if } \cvar_i, \cvar_j \in \bcvar,
   \end{split}
\end{equation*}
\begin{equation}%
   \label{eq:EXPONENTIAL-KERNEL-TIME-2}
   \begin{split}
      \kF(\cvar, \cvar_j) & = \sigma^2_p \exp\left( -\left( \eta - \eta_j \right)\T \Lambda^{-1} \left( \eta - \eta_j \right) \right) \\
         & \exp \left( - \frac{\tau}{2 \lambda_{\tau}^2} + \sum_{k = j}^{m} \left\| \frac{\tau_k}{2 \lambda_{\tau}} \right\| \right) \text{if } \cvar \notin \bcvar, \cvar_j \in \bcvar,
   \end{split}
\end{equation}
where $\lambda_{\tau}$ is a parameter to be tuned and the quantity $(\bcvar, \boldsymbol{\xi_2})$ represents the data-set constructed as discussed in~\secref{sec:GAUSSIAN-INFERENCE}
extended with the sampled clock $\tau$, and stored, as a state variable, inside the shift register $\varsigma$.
The addition of the clock as independent variable iside the GP regression is the key to obtain a good estimation of $\psi$ starting from
the noisy proxy $(\xi_2, \eta)$. This choice is motivated by the fact that the introduction of $\tau$ allows us to shape $\mu$
through the parameter $\lambda_{\tau}$ whose inverse value can be interpreted as a \textit{forgetting factor}, commonly used in identification.

The design parameters $\left( g, l, \delta, \rho \right)$ in~\eqref{eq:OVERALL-REGULATOR} can be chosen so that the closed-loop
system has an asymptotic regulation error bounded by a function of the best attainable prediction, namely
\begin{equation*}
   \lim_{t \to \infty} \sup \left\| e(t) \right\| = c_e \lim_{t \to \infty} \sup \left\| \dot{\eta}_d^*(t) - \mu(\eta^*(t)) \right\|,
\end{equation*}
with $c_e$ constant that depends on the chosen parameters.
Such a results directly follows from the adaptation of the arguments reported by~\cite{bin2019class} and~\cite{bin2020approximate}
in the specific case in which~\eqref{eq:OVERALL-REGULATOR} satisfies a set of properties known as \textit{identifier requirements}.
The following results are instrumental to verify that the proposed regulator fits inside the same framework of~\cite{bin2019class} and~\cite{bin2020approximate}.
\begin{lemma}%
   \label{lem:DWELL-TIME}
   Let Assumptions~\ref{assm:INTERNAL-MODEL},~\ref{assm:MU-CONTINUOUS-BOUNDED}, and~\ref{assm:K-CONTINUOUS-BOUNDED} hold.
   Moreover, suppose that the chosen $\sigma^2_{\text{thr}}$ satisfies
   \begin{equation}%
      \label{eq:SIGMA-CONDITION}
      \frac{\sigma_p^2 \sigma_n^2}{\sigma_p^2 + \sigma_n^2} < \sigma^2_{\text{thr}} < \sigma_p^2,
   \end{equation}
   then any solution $(\eta, \xi_1, \xi_2, \varsigma, \tau)$ of~\eqref{eq:OVERALL-REGULATOR} originating from the set
   $\chi_0 = \{(\eta, \xi_1, \xi_2, \varsigma, \tau) \in \R^{dn_e} \times \R^{n_e} \times \R^{n_e} \times \cS \times \R_{\ge 0} : \tau = 0\}$
   satisfies a dwell time condition in the sense of~\cite{hespanha1999stability}, and a reverse dwell time condition in
   the sense of~\cite{hespanha2005input}.
\end{lemma}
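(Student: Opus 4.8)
The two conditions concern the sequence of inter-jump intervals, which is entirely governed by the evolution of the posterior variance $\sigma^2(\cvar)$: along flow the register $\varsigma$ (hence $\gamma(\varsigma)$, $\kM$ and the stored samples) is frozen and only the query point $\cvar=\col(\eta,\tau)$ moves, so a jump is triggered exactly when $\sigma^2(\cvar)$ climbs from below up to $\sigma^2_{\text{thr}}$, while each jump appends a fresh sample and resets $\tau$. The plan is to (i) prove a reverse dwell time by exhibiting a uniform $\tau^\star$ after which $\sigma^2(\cvar)\ge\sigma^2_{\text{thr}}$ necessarily holds, and (ii) prove a dwell time by showing that immediately after a jump $\sigma^2(\cvar)$ is uniformly bounded away from $\sigma^2_{\text{thr}}$ and can grow back only at a bounded rate. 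Two structural facts are used throughout: $\kF(\cvar,\cvar)=\sigma_p^2$ for the kernel in~\eqref{eq:EXPONENTIAL-KERNEL-TIME-2}, and $\|\gamma(\varsigma)\|\le 1/\sigma_n^2$ for every $\varsigma$, the latter because $\kM\succeq 0$ (Gram matrix of a positive-definite kernel), so that the eigenvalues of $\kM+\sigma_n^2 I_{n_{\text{ds}}}$ are bounded below by $\sigma_n^2$. Both bounds are independent of the particular solution, which is what yields \emph{uniform} dwell and reverse-dwell times.

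For the reverse dwell time I would exploit the forgetting structure of the clock-augmented kernel. Each entry of the kernel vector obeys $\kV_j(\cvar)\le\sigma_p^2\exp(-\tau/(2\lambda_{\tau}^2))$, since the $\eta$-factor and the accumulated-time factor are at most $1$; hence $\|\kV(\cvar)\|^2\le n_{\text{ds}}\,\sigma_p^4\,\exp(-\tau/\lambda_{\tau}^2)$. Combining this with $\kF(\cvar,\cvar)=\sigma_p^2$ and $\|\gamma(\varsigma)\|\le 1/\sigma_n^2$ gives
\begin{equation*}
   \sigma^2(\cvar)\;\ge\;\sigma_p^2-\frac{n_{\text{ds}}\,\sigma_p^4}{\sigma_n^2}\,\exp\!\left(-\tau/\lambda_{\tau}^2\right).
\end{equation*}
Because the right inequality in~\eqref{eq:SIGMA-CONDITION} makes $\sigma_p^2-\sigma^2_{\text{thr}}>0$, the right-hand side exceeds $\sigma^2_{\text{thr}}$ as soon as $\tau\ge\tau^\star:=\lambda_{\tau}^2\ln\!\big(n_{\text{ds}}\sigma_p^4/(\sigma_n^2(\sigma_p^2-\sigma^2_{\text{thr}}))\big)$. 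Thus no solution can flow for longer than $\tau^\star$ without entering $\D$, and since $\tau^\star$ depends only on fixed design constants it serves as a uniform reverse dwell time in the sense of~\cite{hespanha2005input}.

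For the dwell time I would first control the variance right after a jump. Since $\eta^{+}=\eta$ and $\tau^{+}=0$, the freshly stored sample coincides in $\eta$ with the new query point and is at zero time-distance from it, so $\kV(\cvar)$ contains an entry equal to $\sigma_p^2$. Using monotonicity of the GP posterior variance under data augmentation (conditioning on the whole register cannot increase the variance relative to conditioning on the single coinciding sample) yields
\begin{equation*}
   \sigma^2(\cvar^{+})\;\le\;\sigma_p^2-\frac{(\sigma_p^2)^2}{\sigma_p^2+\sigma_n^2}\;=\;\frac{\sigma_p^2\,\sigma_n^2}{\sigma_p^2+\sigma_n^2}\;<\;\sigma^2_{\text{thr}},
\end{equation*}
the last inequality being exactly the left inequality in~\eqref{eq:SIGMA-CONDITION}. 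Hence after every jump the state re-enters the interior of $\C$ with uniform margin $\Delta:=\sigma^2_{\text{thr}}-\sigma_p^2\sigma_n^2/(\sigma_p^2+\sigma_n^2)>0$. It then remains to bound the growth rate along flow: since $\kF(\cvar,\cvar)\equiv\sigma_p^2$ and $\gamma(\varsigma)$ is frozen, $\dot{\sigma^2}=-2\,\kV(\cvar)\T\gamma(\varsigma)\,\dot{\kV}(\cvar)$, and using $\|\kV\|\le\sqrt{n_{\text{ds}}}\,\sigma_p^2$, $\|\gamma\|\le1/\sigma_n^2$, $\dot\tau=1$, and the Lipschitz bounds $L_{\kF},L_{d\kF}$ of~\assmref{assm:K-CONTINUOUS-BOUNDED} together with boundedness of $\dot\eta$, one obtains a uniform bound $\dot{\sigma^2}\le M$. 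The minimal time to cover the margin is then at least $\tau_D:=\Delta/M>0$, giving the dwell time in the sense of~\cite{hespanha1999stability}.

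The main obstacle is the uniform rate bound $M$ in the dwell-time step, which hinges on boundedness of $\dot\eta=\Phi(\eta,\mu(\cdot))+Ge$ along flow. This in turn requires $\eta$ and $e$ to remain in a compact operating region; I would secure this via~\assmref{assm:INTERNAL-MODEL} (the compact set $H^*$ and the $\delta,l$-independent Lipschitz constants) and the boundedness of $\mu$ from~\assmref{assm:MU-CONTINUOUS-BOUNDED}, restricting attention to the compact set on which the closed-loop solution evolves. A secondary, bookkeeping point to verify is that the jump map indeed places the newest stored sample at zero $\eta$- and time-distance from the post-jump query point, which is what makes the post-jump variance collapse to the lower edge of~\eqref{eq:SIGMA-CONDITION}; this is a direct check on the clock indexing of~\eqref{eq:EXPONENTIAL-KERNEL-TIME-2} and the shift dynamics of $\varsigma$. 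By contrast, the reverse-dwell step is self-contained and relies only on the global bounds $\kF(\cvar,\cvar)=\sigma_p^2$, $\|\gamma\|\le1/\sigma_n^2$, and the exponential forgetting.
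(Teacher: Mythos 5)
Your proof is correct and follows essentially the same route as the paper: the forgetting structure of the clock-augmented kernel forces $\sigma^2\to\sigma_p^2>\sigma^2_{\text{thr}}$ along flow (persistence of jumps / reverse dwell time), the coincidence of the freshly stored sample with the post-jump query point collapses the variance to at most $\sigma_p^2\sigma_n^2/(\sigma_p^2+\sigma_n^2)<\sigma^2_{\text{thr}}$, and a uniform bound on $\dot{(\sigma^2)}$ during flow yields the dwell time. The only differences are cosmetic refinements in your favor: you make the reverse-dwell bound quantitative via $\|\gamma(\varsigma)\|\le 1/\sigma_n^2$ and the explicit exponential decay of $\kV$, and you replace the paper's appeal to an external uniform-error theorem for the post-jump variance with a direct monotonicity-under-conditioning argument, while correctly flagging (as the paper glosses over) that the rate bound on $\dot{(\sigma^2)}$ needs $\dot\eta$, including the $Ge$ term, to be bounded on the relevant compact set.
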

\begin{proof}
   Using the same arguments proposed by~\cite{hespanha2005input}, we shape the proof to show the existence of $\overline{\text{T}}$ and $\underline{\text{T}}$
   such that $\C \subseteq \{ (\eta, \xi_1, \xi_2, \varsigma, \tau) \in \R^{dn_e} \times \R^{n_e} \times \cS \times \R^{mp} \times \R_{\ge 0} : 0 \le \tau \le \overline{\text{T}} \}$
   and $\D \subseteq \{ (\eta, \xi_1, \xi_2, \varsigma, \tau) \in \R^{dn_e} \times \R^{n_e} \times \cS \times \R^{mp} \times \R_{\ge 0} : \underline{\text{T}} \le \tau \le \overline{\text{T}} \}$.
   First note that, in view of \assmref{assm:INTERNAL-MODEL}, from the definition of kernel in~\eqref{eq:EXPONENTIAL-KERNEL-TIME-2}, the value of $\sigma^2(\eta, \varsigma, \alpha, \tau)$
   reaches $\sigma_p^2$ as long as $t$ goes to infinty, i.e.
   \begin{equation}%
      \label{eq:SIGMA-LIMIT}
      \lim_{t \to \infty} \sigma^2(\eta(t), \varsigma(t), \alpha(t), \tau(t)) = \sigma_p^2.
   \end{equation}
   Recalling $\cvar = \col(\eta, \tau)$, then the flow and jump dynamics of $\sigma^2$ are described by
   \begin{equation}%
      \label{eq:SIGMA-FLOW}
      \begin{split}
         \dot{\left(\sigma^2\right)} & = \dot{\cvar}\T \left(\frac{d \kF(\cvar, \cvar)}{d \cvar} - 2 \frac{d \kV(\cvar)}{d \cvar}\T \gamma(\varsigma) \kV(\cvar) \right) \\
         & = - 2 \dot{\cvar}\T \frac{d \kV(\cvar)}{d \cvar}\T \gamma(\varsigma) \kV(\cvar),
      \end{split}
   \end{equation}
   when $\left(\eta, \xi_1, \xi_2, \varsigma, \tau\right) \in \C$, and
   \begin{equation}%
      \label{eq:SIGMA-JUMP}
      \left(\sigma^{2}\right)^+ = \kF(\cvar^+, \cvar^+) - \kV(\cvar^+)\gamma(\varsigma^+)\kV(\cvar^+),
   \end{equation}
   when $\left(\eta, \xi_1, \xi_2, \varsigma, \tau\right) \in \D$.
   The existence of $\overline{\text{T}}$ follows from~\eqref{eq:SIGMA-LIMIT} and the second inequality of~\eqref{eq:SIGMA-CONDITION},
   in particular by choosing $\sigma^2_{\text{thr}} < \sigma_p^2$, since $\sigma^2 \to_{\tau \to \infty} \sigma_p^2$, there exists a real value $\overline{\text{T}}$
   such that $\sigma^2 \ge \sigma^2_{\text{thr}}$ for any $\tau \ge \overline{\text{T}}$. This ensures persistency of jump intervals.
   Consider now Equations~\eqref{eq:SIGMA-FLOW} and~\eqref{eq:SIGMA-JUMP}, the flow dynamics $\dot{\left(\sigma^2\right)}$ results to be a continous function
   being it a product of Lipschitz functions (Assumptions~\ref{assm:INTERNAL-MODEL},~\ref{assm:MU-CONTINUOUS-BOUNDED},~\ref{assm:K-CONTINUOUS-BOUNDED}).
   Moreover, its norm can be upper bounded as
   \begin{equation*}
      \left\| \dot{\left(\sigma^2\right)} \right\| \le 2 \left\| \dot{\cvar} \right\| \left\| \frac{d \kV(\cvar)}{d \cvar} \right\| \left\| \gamma(\varsigma) \right\| \left\| \kV(\cvar) \right\|,
   \end{equation*}
   with $\dot{\cvar} = \left[ \eta_2, \dots, \eta_d, \psi(\eta), 1 \right]\T$.
   Since during flow the function $\kF$ takes the form of~\eqref{eq:EXPONENTIAL-KERNEL-TIME-2}, rewriting the latter as
   \begin{equation*}
      k(z, z_j) = \sigma_p^2 \exp\left(-(z-z_j)\T \bar{\Lambda}^{-1} (z-z_j)\right),
   \end{equation*}
   the quantity 
   \begin{equation*}
      \frac{d \kV(\cvar)}{d \cvar} =
      \begin{bmatrix}
         -\kF(\cvar, \cvar^1)\bar{\Lambda}^{-1}(\cvar - \cvar^1) \\
         -\kF(\cvar, \cvar^2)\bar{\Lambda}^{-1}(\cvar - \cvar^2) \\
         \vdots \\
         -\kF(\cvar, \cvar^m)\bar{\Lambda}^{-1}(\cvar - \cvar^m)
      \end{bmatrix},
   \end{equation*}
   where $\cvar^i$ denotes the $i$th training point stored inside $\varsigma$, and with $\bar{\Lambda}$ properly defined.
   \assmref{assm:INTERNAL-MODEL} ensures $\left\| \dot{\cvar} \right\| \le l_{\cvar}$, with $l_{\cvar}$ dependent on the chosen $H^*$,
   while~\assmref{assm:K-CONTINUOUS-BOUNDED} ensures boundedness of $\gamma(\varsigma)$ and $\kV(\cvar)$, namely
   \begin{equation*}
      \begin{matrix}
         \left\| \gamma(\varsigma) \right\| \le l_{\gamma}, &
         \left\| \kV(\cvar) \right\| \le l_{k}
      \end{matrix}.
   \end{equation*}
   From the previous arguments follows that $\left\| d \kV(\cvar) / d \cvar \right\| \le l_{dk}$, yielding to
   $\left\| \dot{\sigma}^2 \right\| \le l_{\sigma}$ with $l_{\sigma} = l_{\cvar}l_{dk}l_{\gamma}l_{k}$.
   Note that the bound $l_{\sigma}$ does not depend neither on the chosen regulator parameters, nor on the initial conditions.
   Consider now the jump dynamics, under the~\assmref{assm:K-CONTINUOUS-BOUNDED} of Lipschitz continuous kernel, we can explicitly derive an upper bound on
   the value of $\left(\sigma^2\right)^+$ at each jump (see~\cite[Theorem 1]{lederer2021uniform})
   \begin{equation*}
      \begin{split}
         \left(\sigma^2\right)^+ \le \Big[ \big|\B(\bar{\cvar}) & \big| \left(\kF(\cvar^*, \cvar^*) + 2L_{k} \varrho \right) + \sigma_n^2 \Big]^{-1} \\
         & \Big[ \left(2lb \big( \kF(\cvar^+, \cvar^+ ) + \kF(\cvar^*, \cvar^+)\right) \\ 
         & - L_{k}^2 \varrho^2\big) \big|\B(\cvar^*)\big| + \sigma_n^2 \kF(\cvar^+, \cvar^+)\Big],
      \end{split}
   \end{equation*}
   where $\B(\cvar^*)$ denotes the training data set restricted to a ball around $\cvar^*$ with radius $\varrho \in \R$.
   In our specific case $\cvar^* = \cvar^+$, thus the aforementioned relation boils down to
   \begin{equation*}
      \left(\sigma^2\right)^+ \le \frac{\left(4L_k \varrho \sigma_p^2 - L_k^2 \varrho^2\right)m_{\B(\cvar^*)} + \sigma_n^2 \sigma_p^2}{\left(\sigma_p^2 + L_k \varrho\right)m_{\B(\cvar^*)} + \sigma_n^2}
   \end{equation*}
   with $\varrho \le \sigma_p^2 / L_k$ and $m_{\B(\cvar^*)} \le m$ is the cadinality of $\B(\cvar^*)$.
   Clearly, the worst case scenario is represented by the case in wich $\B(\cvar^*)$ cannot embrace any training points, thus when $\varrho = 0$.
   In this particular case we get
   \begin{equation*}
      \left(\sigma^2\right)^+ \le \frac{\sigma_n^2 \sigma_p^2}{\sigma_p^2 + \sigma_n^2} < \ \sigma^2_{\text{thr}}.
   \end{equation*}
   Finally, the existence of $\underline{\text{T}}$ follows from the fact that the flow dynamics~\eqref{eq:SIGMA-FLOW} is continuous
   with upper bounded norm and its initial condition, at each jump time, is lower than the chosen threshold $\sigma^2_{\text{thr}}$.
   This ensures persistency of flow intervals.
\end{proof}
\begin{lemma}
   Consider the hybrid subsystem
   \begin{equation}%
      \label{eq:HYBRID-SUBSYSTEM}
      \begin{split}
         &
         \begin{cases}
            \dot{\tau} & = 1 \\
            \dot{\eta} & = \Phi(\eta, \mu(\eta, \varsigma, \alpha, \tau)) + Ge \\
            \dot{\varsigma} & = 0 \\
         \end{cases} \\
         & (\eta, \varsigma, \tau) \in \C, \\
         &
         \begin{cases}
            \tau^{+} & = 0 \\
            \eta^{+} & = \eta \\
            \varsigma^{+} & = (S \otimes I_{p}) \varsigma + (B \otimes I_{p}) \\ 
            & \hspace{1.5cm} \begin{bmatrix} \eta+\delta_{\eta_{1:d-1}} & \eta_d+\delta_{\eta_d} & \tau \end{bmatrix}\T \\
         \end{cases} \\
         & (\eta, \varsigma, \tau) \in \D, \\
      \end{split}
   \end{equation}
   with $\C$ and $\D$ defined as above and $\delta_{\eta} = (\delta_{\eta_{1:d-1}}, \delta_{\eta_d}) \in \R^{dn_e}$ hybrid input.
   Let Assumptions~\ref{assm:MU-CONTINUOUS-BOUNDED} and~\ref{assm:K-CONTINUOUS-BOUNDED}, and Lemma~\ref{lem:DWELL-TIME} hold, then the tuple $\left(\cS, \mu, \gamma, \alpha\right)$
   satisfies the identifier requirements~\cite{bin2020approximate} relative to $\cJ$, namely there exists a compact set $S^* \subset \cS$, $\beta_{\varsigma} \in \KL$,
   a Lipschitz function $\rho_{\varsigma} \in \cK$, and for each solution $(w, x, \eta, \tau)$ to Equations~\eqref{eq:GENERAL-EXOSYSTEM},~\eqref{eq:PARTICULAR-SYSTEM}, and~\eqref{eq:INTERNAL-MODEL},
   a hybrid arc $\varsigma^*: \text{dom}\left(w, x, \eta, \tau\right) \mapsto \cS$ and a $j^* \in \N$ such that $(\tau, \eta, \varsigma^*, d)$ with $\delta_{\eta} = 0$
   is a solution pair to~\eqref{eq:HYBRID-SUBSYSTEM} satisfying $\varsigma^*(j) \in S^*$ for all $j \ge j^*$, and the following holds:
   \begin{enumerate}
      \item \textit{\textbf{Optimality:}}
      For all $j \ge j^*$, the function $\mu^*(\cdot) = \mu(\eta, \varsigma^*, \gamma(\varsigma^*), \tau)$ satisfies
      \begin{equation*}
         \mu^*_j(\cdot) \in \arg \min \cJ_j.
      \end{equation*}
      \item \textit{\textbf{Stability:}}
      For every hybrid input $\delta_{\eta}$, every solution pair $(\eta, \varsigma, \tau, \delta)$ of the hybrid subsystem~\eqref{eq:HYBRID-SUBSYSTEM}
      satisfies for all $j \in \text{Jmp}(\eta, \varsigma, \tau)$
      \begin{equation*}
         \left| \varsigma(j) - \varsigma^*(j) \right| \le \max \{ \beta_{\varsigma}(\left| \varsigma(0) - \varsigma^*(0) \right|,j), \rho_{\varsigma}(\left| \delta_{\eta} \right|_j) \}.
      \end{equation*}
      \item \textit{\textbf{Regularity:}}
      The map $\mu(\cdot)$ is Lipschitz and differentiable with a locally Lipschitz derivative.
   \end{enumerate}
\end{lemma}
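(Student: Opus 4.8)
The plan is to verify the three identifier requirements one at a time, exploiting that the register $\varsigma$ has linear, finite‑memory dynamics while $\mu$ inherits its regularity from the kernel.

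\emph{Optimality} is inherited from the construction recalled in \secref{sec:GAUSSIAN-INFERENCE}. By the representer theorem, every minimiser of $\cJ_j$ — regularised by the RKHS norm and fitted through the quadratic Gaussian log‑likelihood term — has the form $\mu(\cvar)=\kV(\cvar)\T\alpha$ with $\alpha=(\kM+\sigma_n^2 I_{n_{\text{ds}}})^{-1}\sty$. Along $\varsigma^*$ the stored regressor/target pairs are exactly the samples of the ideal trajectory $(\eta^*,\dot\eta_d^*,\tau)$, so $\gamma(\varsigma^*)=(\kM+\sigma_n^2 I_{n_{\text{ds}}})^{-1}$ and $\mu^*_j(\cdot)=\kV(\cdot)\T\gamma(\varsigma^*)\boldsymbol{\xi_2}$ coincides with that minimiser; hence $\mu^*_j\in\arg\min \cJ_j$ for every $j\ge j^*$, i.e.\ once the buffer is full.

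\emph{Stability} I would obtain from the shift structure of the register. Since $\dot\varsigma=0$ during flow only the jump map matters, and writing $\tilde\varsigma=\varsigma-\varsigma^*$ with $\eta$ acting as the common regressor gives $\tilde\varsigma^+=(S\otimes I_p)\tilde\varsigma+(B\otimes I_p)[\delta_{\eta_{1:d-1}},\delta_{\eta_d},0]\T$. As $S$ is the nilpotent shift, $(S\otimes I_p)^{n_{\text{ds}}}=0$, so the homogeneous response is deadbeat — zero after $n_{\text{ds}}$ jumps — and is majorised by some $\beta_\varsigma\in\KL$, whereas the forced response is an FIR sum of at most $n_{\text{ds}}$ past perturbations, dominated by a linear, hence Lipschitz $\cK$, gain $\rho_\varsigma(|\delta_\eta|_j)$; the maximum of the two delivers the claimed bound at each jump index. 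Choosing $j^*=n_{\text{ds}}$, once full the arc $\varsigma^*$ stores only samples of $(\eta^*,\dot\eta_d^*,\tau)$, which lie in $H^*\times[0,\overline{\mathrm{T}}]$ with $H^*$ the compact set of \assmref{assm:INTERNAL-MODEL} and $\overline{\mathrm{T}}$ the clock bound furnished by \lref{lem:DWELL-TIME}; this fixes the compact $S^*$.

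\emph{Regularity} follows from the explicit form $\mu(\cvar)=\kV(\cvar)\T\gamma(\varsigma)\boldsymbol{\xi_2}$. During flow $\gamma(\varsigma)$ is a constant matrix bounded by the $l_\gamma$ of \lref{lem:DWELL-TIME} and $\boldsymbol{\xi_2}$ is bounded, while each entry $\kF(\cvar,\cvar_k)$ of $\kV(\cvar)$ is Lipschitz with a locally Lipschitz derivative by \assmref{assm:K-CONTINUOUS-BOUNDED} — including in the clock argument, the $\tau$‑enhancement in \eqref{eq:EXPONENTIAL-KERNEL-TIME-2} being smooth. A bounded linear combination of such maps is again Lipschitz and differentiable with a locally Lipschitz derivative, matching the constants $L_\eta,\mu_{\text{max}}$ of \assmref{assm:MU-CONTINUOUS-BOUNDED}. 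The step I expect to be the main obstacle is the stability item, precisely the feedback between $\varsigma$ and $\eta$ through $\mu$: the data pushed into the buffer depends on the buffer itself, so a priori the difference dynamics are not the clean linear system written above. The crux is to justify, as in the identifier‑requirement framework of \cite{bin2020approximate}, that $\eta$ enters as an exogenous regressor whose only admissible perturbation is $\delta_\eta$, so that the register response genuinely decouples into the nilpotent deadbeat part plus the FIR forced part, the residual $\eta$‑versus‑$\eta^*$ gap being settled at the composite closed‑loop level rather than inside this lemma; a secondary check is to confirm that $\overline{\mathrm{T}}$ renders every stored clock stamp uniformly bounded, which is what closes the compactness of $S^*$.
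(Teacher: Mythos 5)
Your reconstruction is essentially correct, but you should know that the paper itself contains no proof of this lemma: it is dispatched in a single sentence deferring to \cite[Proposition 2]{bin2020approximate}, so there is no in-paper argument to compare against and your sketch is, in effect, the missing proof. Judged on its own terms it hits the right three ingredients. Optimality is indeed the representer theorem applied to the regularised functional $\cJ$ of \secref{sec:GAUSSIAN-INFERENCE}; the one small correction is that $\varsigma^*$ stores noiseless samples of the \emph{given} solution's $(\eta,\tau)$ trajectory (the arc is defined for an arbitrary closed-loop solution, not for the ideal steady state $\eta^*$), but since the GP posterior mean minimises $\cJ_j$ for whatever data the buffer holds, nothing in your argument breaks. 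Stability via nilpotency of $S\otimes I_p$ (deadbeat homogeneous response majorised by a $\KL$ bound) plus the FIR linear gain on $\delta_{\eta}$ is the right mechanism, and the obstacle you flag -- the apparent feedback of $\varsigma$ into the stored data through $\mu$ -- is resolved exactly as you suspect: in the identifier-requirement formalism the register is analysed as a system driven by the regressor stream as an exogenous input, with $\delta_{\eta}$ the only admissible perturbation, the $\eta$-versus-$\eta^*$ mismatch being handled at the closed-loop small-gain stage rather than inside this lemma. Your compactness argument for $S^*$, combining $H^*$ from \assmref{assm:INTERNAL-MODEL} with the uniform clock bound $\overline{\text{T}}$ from \lref{lem:DWELL-TIME} so that every stored triple lies in a fixed compact set after $j^*=n_{\text{ds}}$ jumps, is likewise the right closing step. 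In short: no gap, and considerably more content than the paper provides.
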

The proof of the latter lemma directly follows from the same arguments proposed by~\cite[Proposition 2]{bin2020approximate}.

\section{Numerical Simulations}%
\label{sec:NUMERICAL-SIMULATIONS}
\begin{figure}[t]
	\centering
	\includegraphics[width=0.35\textwidth]{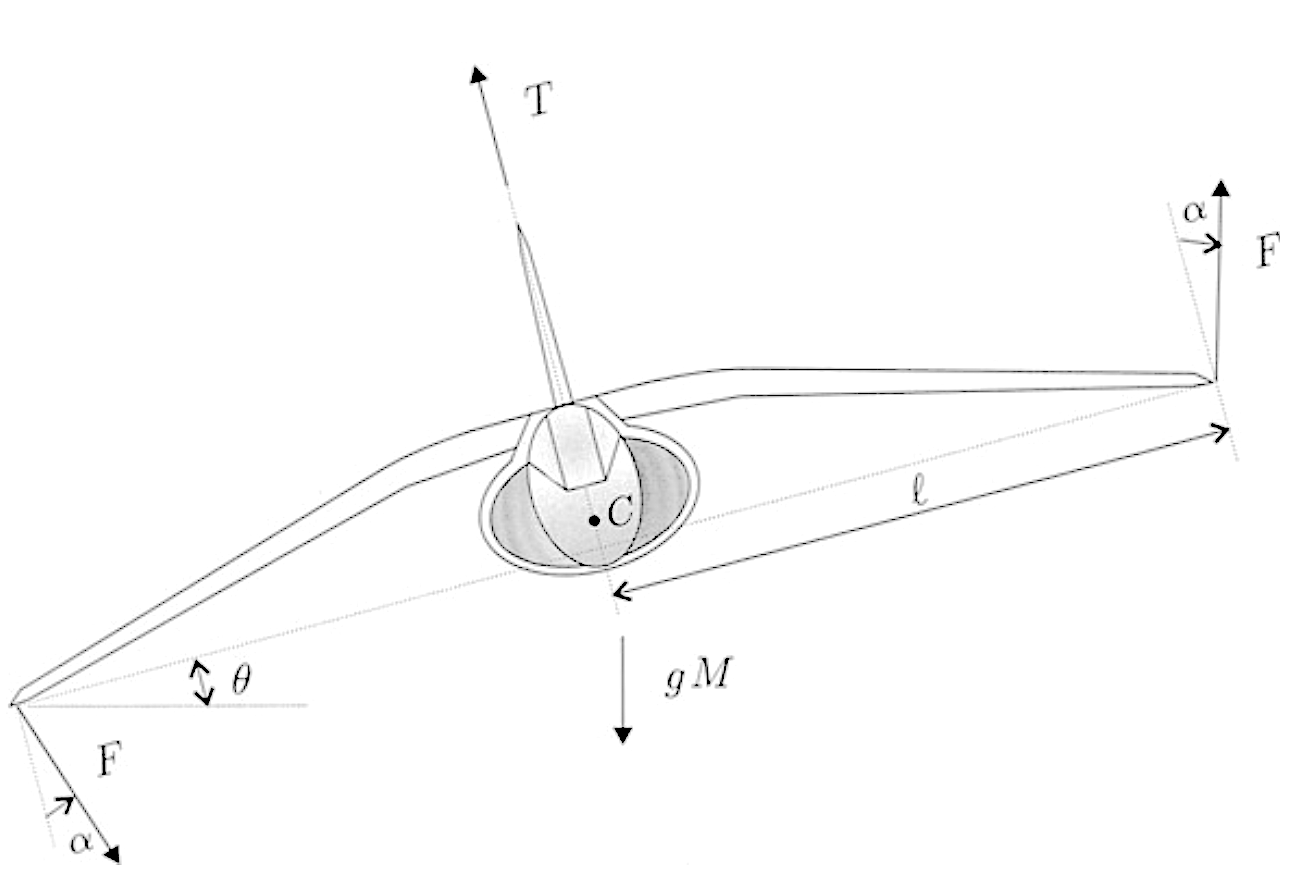}
	\caption{The vertical takeoff and landing aircraft considered in numerical simulations.}%
   \label{fig:VTOL-MODEL}
\end{figure}
\begin{figure}[t]
	\centering
	\includegraphics[trim={1cm 0.8cm 0cm 0cm}, width=0.5\textwidth]{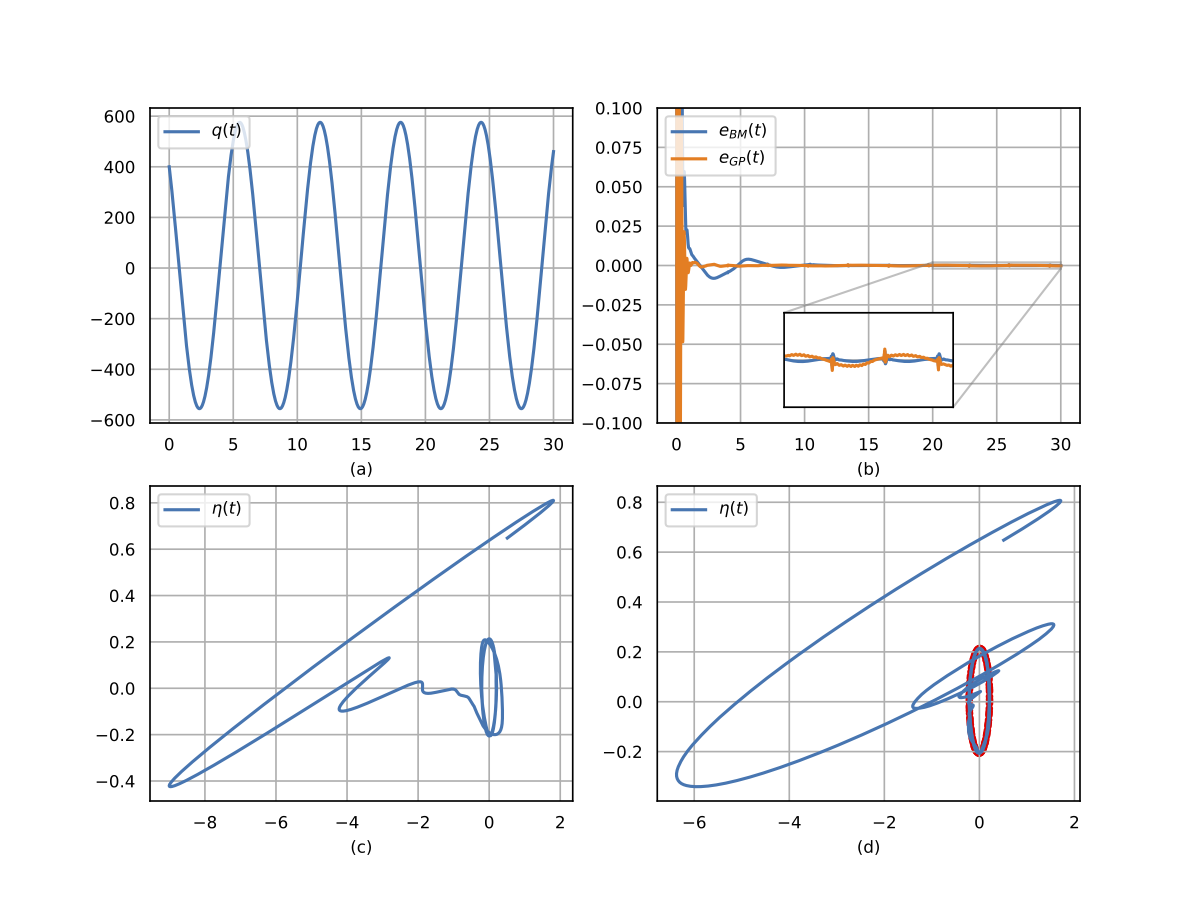}
	\caption{Results obtained comparing our approach $(e_{GP})$ versus~\cite{bin2019class} $(e_{BM})$ when the exogenous disturbance $d(w)$ is generated by~\eqref{eq:EXOSYSTEM-1}.
   In both cases, the used regulator parameters are the same reported in~\tabref{tab:REGULATOR-SIMULATION-PARAMETERS}.
   In the figure, image $(a)$ depicts the injected noise, $(b)$ compares the behavior of the regulation errors, while $(c)$ and $(d)$ shows the dynamics of $(\eta_1, \eta_2)$
   along the experiments in which the Bin-Marconi regulator~\cite{bin2019class} and ours is applied, respectively. In figure $(c)$ the used samples $(\varsigma)$ during the last
   flow interval are shown as red dots.}%
   \label{fig:LIN-SIM}
\end{figure}
We consider, as a testbed, the problem of regulation the lateral $\left(y_1, y_2\right)$ and angular $\left(\theta_1, \theta_2\right)$
dynamics of a Vertical-TakeOff-and-Landing (VTOL) aircraft~\cite{isidori2003robust} subjected to lateral forces
produced by the wind denoted by $d(w)$. A graphical representation of the considered system is reported in~\figref{fig:VTOL-MODEL}.
The VTOL dynamics reads as
\begin{equation}%
   \label{eq:VTOL-DYNAMICS}
   \begin{matrix}
      \begin{split}
         \dot{y}_1 & = y_2,  \\ \dot{y}_2 & = d(w) - \boldsymbol{g} \tan(\theta_1) + v,
      \end{split} & 
      \begin{split}
         \dot{\theta}_1 & = \theta_2, \\ \dot{\theta}_2 & = 2\boldsymbol{l}J^{-1}u,
      \end{split}
   \end{matrix}
\end{equation}
where $M>0$ and $J>0$ are the aircraft mass and inertia respectively, while $\boldsymbol{l}>0$ represents the wings lenght and $\boldsymbol{g} > 0$ the
gravitational constant. The input $u$ is the force $\left(F\right)$ on the wingtips, $v$ is a vanishing input taking into
account the (controlled) vertical dynamics (not considered here), and $d(w) := M^{-1}d_0(w)$, with $d_0(w)$ that is
the lateral force produced by the wind. Considering as regulation error the aircraft lateral position $\left(e = y_1\right)$, the control
objective is to remove the wind disturbance out from the lateral dynamics.
Let $w(t)$ be generated by an exosystem of the form~\eqref{eq:GENERAL-EXOSYSTEM} and consider the following change of coordinates
\begin{equation*}
   \begin{matrix}
      \begin{split}
         \chi_1 & = y_1, \\ \chi_2 & = y_2,
      \end{split} &
      \begin{split}
         \chi_3 & = d(w) + \boldsymbol{g} \tan(\theta_1), \\ \zeta & = L_s d(w) - {\boldsymbol{g} \theta_2}/{\cos(\theta_1)}.
      \end{split}
   \end{matrix}
\end{equation*}
In the new coordinates, letting $x = \col(\chi, \zeta)$, the system~\eqref{eq:VTOL-DYNAMICS} reads as follow
\begin{equation}%
   \label{eq:NEW-VTOL-SYSTEM}
   \begin{matrix}
      \begin{split}
         \dot{\chi}_1 & = \chi_2, \\
         \dot{\chi}_2 & = \chi_3,
      \end{split} &
      \begin{split}
         \dot{\chi}_3 & = \zeta, \\
      \dot{\zeta} & = q(w,x) + \Omega(w,x)u,
      \end{split}
   \end{matrix}
\end{equation}
with $q(w,x)$ and $\Omega(w,x)$ given by
\begin{equation*}
   \begin{split}
      q(w,x) = L_s^2 d(w) - & \frac{1}{\boldsymbol{g}} \left(L_s d(w) - \zeta\right)^2 \\
      & \sin\left(2 \tan^{-1}\left(\frac{d(w) - \chi_3}{\boldsymbol{g}}\right)\right),
   \end{split}
\end{equation*}
\begin{equation*}
   \Omega(w,x) = -2\boldsymbol{g}\boldsymbol{l}J^{-1} \cos\left(\tan^{-1}\left(\frac{d(w) - \chi_3}{\boldsymbol{g}}\right)\right)^{-2}.
\end{equation*}
The system~\eqref{eq:NEW-VTOL-SYSTEM} is in the form~\eqref{eq:PARTICULAR-SYSTEM} with~\assmref{assm:REGULATOR-EQUATIONS} trivially fulfilled, since
the $x_0$ dynamics is absent, by $x^* = 0$ and $u^* = \left(\boldsymbol{g}L_s^2d(w) - 2d(w)(\boldsymbol{g}^2 + d(w)^2)(L_sd(w))^2\right)/2\boldsymbol{l}J^{-1}(\boldsymbol{g}^2 + d(w)^2)$, and~\assmref{assm:UNIFORM-DETECTABILITY}
fulfilled on each compact set with $\mathcal{L}$ a negative number.
With $\left(c_1,c_2,c_3\right)$ the coefficients of a Hurwitz polynomial and $\delta,l > 0$ design parameters, we fix the control law as
\begin{equation*}
   \begin{split}
      u = \mathcal{L}\Big[ c_1 & l \delta^3 (y_1 + \eta_1) + c_2  l \delta^2 y_2 \\ 
         & + c_3 l \delta (-\boldsymbol{g} \tan(\theta_1)) + l(-\boldsymbol{g} \theta_2 / \cos^2(\theta_1)) \Big].
   \end{split}
\end{equation*}
Figures~\ref{fig:LIN-SIM},~\ref{fig:POW-SIM}, and~\ref{fig:ATAN-SIM} report the obtained results when the aircraft is perturbed with
a lateral disturbance $d(w) = (2(10^7w_1) + 10^6w_3)/M$, where $w_1$ and $w_3$ are the states of three different exosystems.
In particular, in~\figref{fig:LIN-SIM}, $s(w)$ reads as
\begin{figure}[t]
	\centering
	\includegraphics[trim={1cm 0.8cm 0cm 0cm}, width=0.5\textwidth]{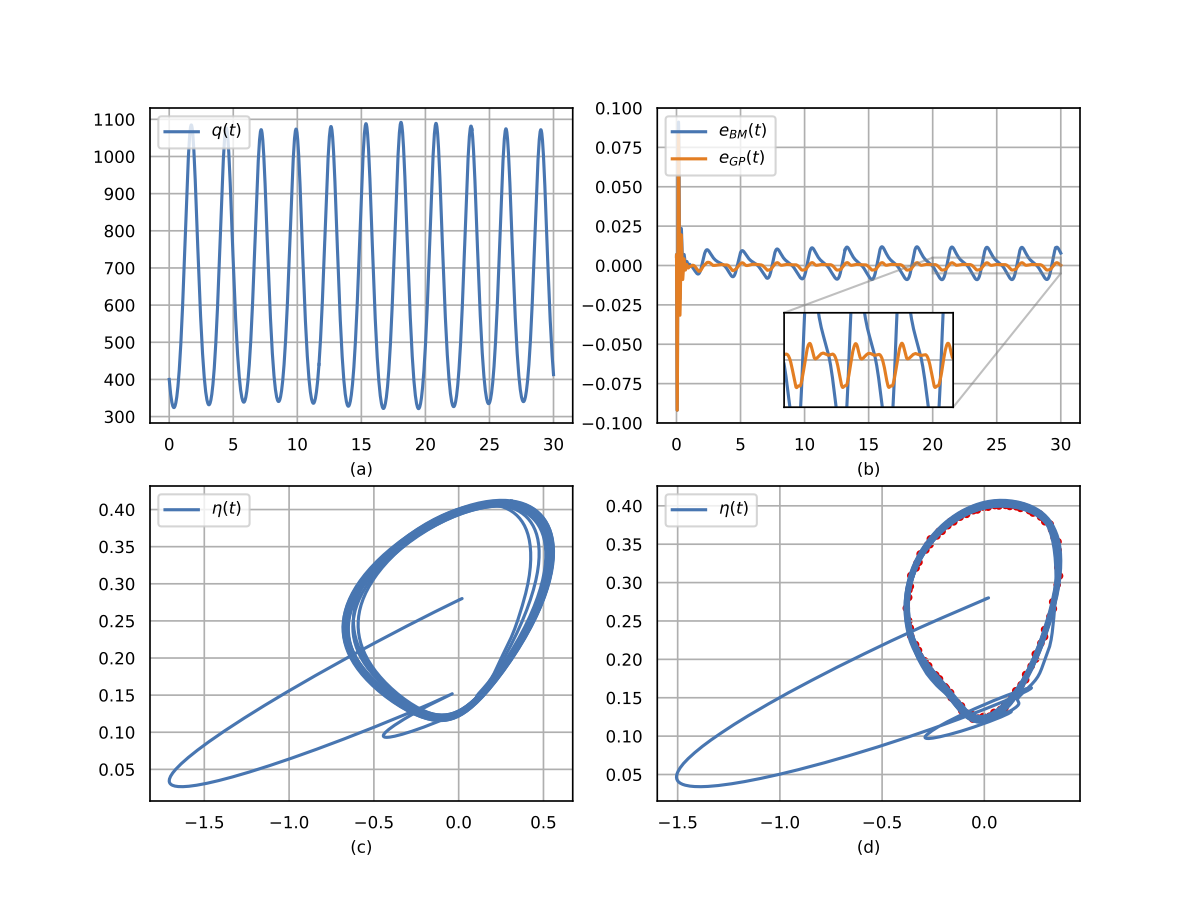}
	\caption{Results obtained comaring our approach $(e_{GP})$ versus~\cite{bin2019class} $(e_{BM})$ when the exogenous disturbance $d(w)$ is generated by~\eqref{eq:EXOSYSTEM-2}.
   For a comprehensive explanation of Figures $(a)$, $(b)$, $(c)$, and $(d)$ please refer to~\figref{fig:LIN-SIM}}%
   \label{fig:POW-SIM}
\end{figure}
\begin{equation}%
   \label{eq:EXOSYSTEM-1}
   \begin{matrix}
      \begin{split}
         \dot{w}_1 & = w_2, \\
         \dot{w}_2 & = -w_1, \\
      \end{split} &
      \begin{split}
         \dot{w}_3 & = w_4, \\
         \dot{w}_4 & = -4w_3,
      \end{split}
   \end{matrix}
\end{equation}
in~\figref{fig:POW-SIM}, $s(w)$ behaves as
\begin{equation}%
   \label{eq:EXOSYSTEM-2}
   \begin{matrix}
      \begin{split}
         \dot{w}_1 & = w_2, \\
         \dot{w}_2 & = 4w_1 - w_1^3, \\
      \end{split} &
      \begin{split}
         \dot{w}_3 & = w_4, \\
         \dot{w}_4 & = -4w_3,
      \end{split}
   \end{matrix}
\end{equation}
while in~\figref{fig:ATAN-SIM}, the exosystem is described by
\begin{equation}%
   \label{eq:EXOSYSTEM-3}
   \begin{matrix}
      \begin{split}
         \dot{w}_1 & = w_2, \\
         \dot{w}_2 & = 3\tan^{-1}\left(w_1\right) - w_1, \\
      \end{split} &
      \begin{split}
         \dot{w}_3 & = w_4, \\
         \dot{w}_4 & = -4w_3.
      \end{split}
   \end{matrix}
\end{equation}
In all simulations we exploit the same set of parameters, for both the regulator and the discrete-time identifier.
The adopted parameters are reported in~\tabref{tab:GP-SIMULATION-PARAMETERS},~\tabref{tab:REGULATOR-SIMULATION-PARAMETERS}, and~\tabref{tab:MODEL-SIMULATION-PARAMETERS}.
\begin{figure}[t]
	\centering
	\includegraphics[trim={1cm 0.8cm 0cm 0cm}, width=0.5\textwidth]{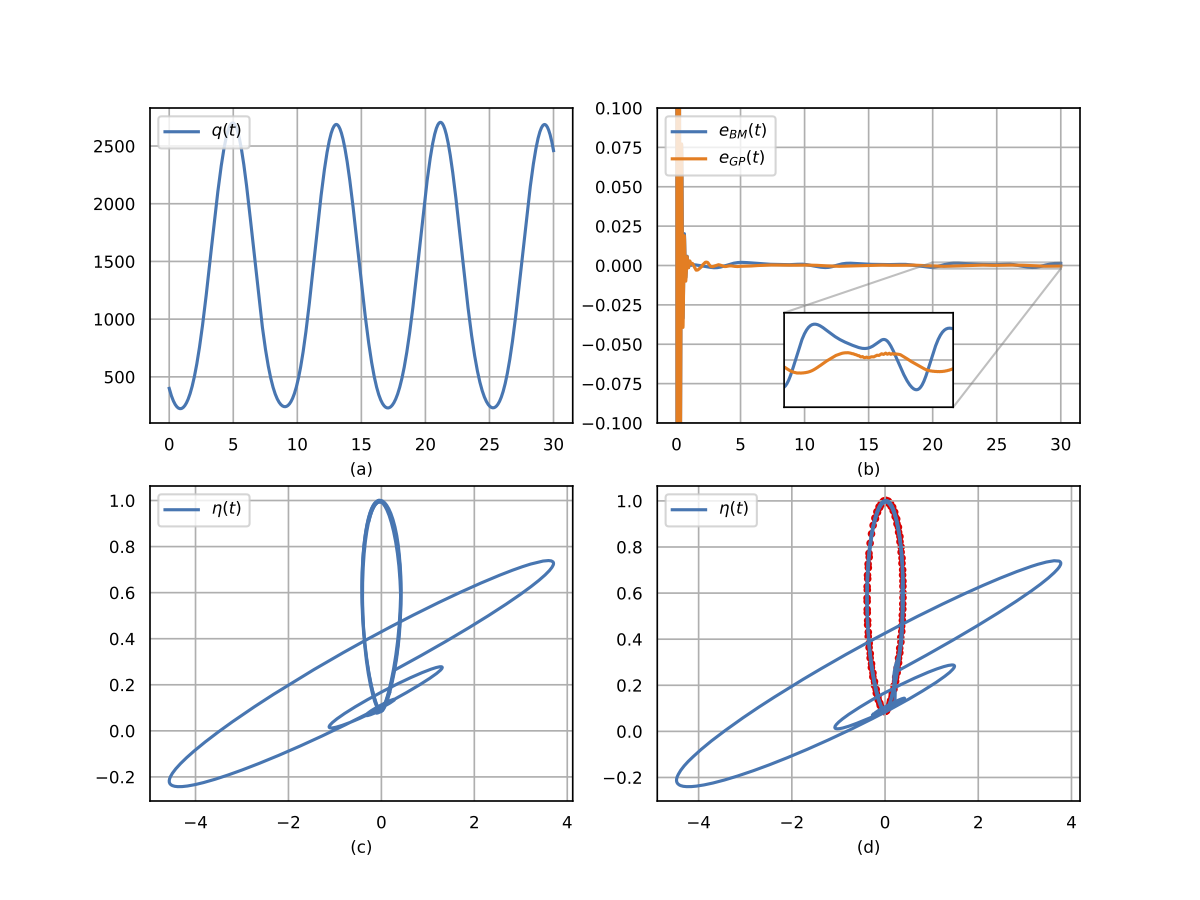}
	\caption{Results obtained comaring our approach $(e_{GP})$ versus~\cite{bin2019class} $(e_{BM})$ when the exogenous disturbance $d(w)$ is generated by~\eqref{eq:EXOSYSTEM-3}.
   For a comprehensive explanation of Figures $(a)$, $(b)$, $(c)$, and $(d)$ please refer to~\figref{fig:LIN-SIM}}%
   \label{fig:ATAN-SIM}
\end{figure}
\begin{table}[b!]
   \centering
   \begin{tabular}{|c|c|c|c|c|c|c|}
       \hline
       \hline
       $n_{\text{ds}}$ & $\lambda_{\eta_1}$ &  $\lambda_{\eta_2}$ & $\lambda_{\tau}$ & $\sigma^2_p$ & $\sigma^2_n$ & $\sigma^2_{\text{thr}}$ \\
       \hline
       $100$ & $0.1$ & $0.1$ & $2$ & $1$ & $0.01$ & $0.1$\\
       \hline
       \hline
   \end{tabular}
   \caption{Gaussian process parameters used in simulations.}%
   \label{tab:GP-SIMULATION-PARAMETERS}
\end{table}
\begin{table}[b!]
   \centering
   \begin{tabular}{|c|c|c|c|c|c|c|c|}
       \hline
       \hline
       $\left(c_1, c_2, c_3\right)$ & $l$ & $\delta$ & $\mathcal{L}$  & $\left(h_1, h_2\right)$ & $g$ & $\left(m_1, m_2\right)$ & $\rho$ \\
       \hline
       $\left(15, 75, 125\right)$ & $250$ & $150$ & $20$ & $\left(15, 70\right)$ & $2$ & $\left(20, 20\right)$ & $2$ \\
       \hline
       \hline
   \end{tabular}
   \caption{Regulator parameters used in simulations.}%
   \label{tab:REGULATOR-SIMULATION-PARAMETERS}
\end{table}
\begin{table}[b!]
   \centering
   \begin{tabular}{|c|c|c|c|}
       \hline
       \hline
       $M$ & $J$ & $\boldsymbol{l}$ & $\boldsymbol{g}$ \\
       \hline
       $5\cdot 10^4$ & $1.25 \cdot 10^4$ & $2$ & $9.81$ \\
       \hline
       \hline
   \end{tabular}
   \caption{Model parameters used in simulations.}%
   \label{tab:MODEL-SIMULATION-PARAMETERS}
\end{table}

\section{Conclusions}%
\label{sec:CONCLUSIONS}
We presented an adaptive learninm-based technique to design internal model-based regulator for a large class
of nonlinear systems. The technique fits in the general framework recently proposed in~\cite{bin2019class} and
shows how the identification of the optimal steady state control input can be performed by using Gaussian process
models. As opposite to previous approaches no immersion assumptions into specific model sets are 
assumed and only smoothness of the ideal steady state control input is required. The learning-based adaptation
is performed by following an ``event-triggered'' logic and hybrid tools are used in the analysis of the closed-loop system.
The paper also presents numerical simulations showing how the proposed method outperforms previous approaches when
the regulated plant or the exogenous disturbances are subject to unmodeled perturbations.

\bibliographystyle{unsrt}
\bibliography{root.bib}
\end{document}